\documentclass[11pt]{article}

\usepackage[margin=0.72in]{geometry}
\usepackage{amsmath,amssymb,amsthm}
\usepackage{mathtools}
\usepackage{bm}
\usepackage{microtype}
\usepackage{multicol}
\usepackage{booktabs}
\usepackage{array}
\usepackage{cite}
\usepackage{hyperref}
\usepackage{xcolor}

\hypersetup{
    colorlinks=true,
    linkcolor=blue,
    citecolor=blue,
    urlcolor=blue
}

\newtheorem{theorem}{Theorem}[section]
\newtheorem{proposition}[theorem]{Proposition}
\newtheorem{corollary}[theorem]{Corollary}

\theoremstyle{definition}

\theoremstyle{remark}

\newcommand{\dd}{\mathrm{d}}
\newcommand{\Disc}{\operatorname{Disc}}
\newcommand{\cA}{\mathcal{A}}

\title{\bfseries Threshold Tails of Black-Hole Differential Observables}
\author{Vedant Subhash\\[2mm]
\small Department of Mathematics, University at Buffalo,\\
\small The State University of New York, Buffalo, New York 14260, USA}
\date{}

\begin{document}
\maketitle

\begin{abstract}
We study how first-order differential observables affect the zero-frequency behavior of black-hole wave equations. Quasinormal modes and late-time tails do not always change in the same way. We give a simple condition for when an observable removes the leading threshold term of a Green function. For Schwarzschild Regge-Wheeler modes, the relevant operator is determined by the regular static solution. The transformation is nondegenerate at nonzero frequency but becomes globally degenerate at zero frequency. As a result, the nonzero quasinormal-mode problem is unchanged, while the leading fixed-radius late-time tail gains one extra inverse power of time.
\end{abstract}

\vspace{5mm}

\begin{multicols}{2}
\raggedcolumns

\section{Introduction}

Black-hole perturbations contain different types of spectral information. Quasinormal modes appear as poles of a frequency-domain Green function. Late-time tails have a different origin. They come from nonanalytic behavior near zero frequency, usually through a branch cut. These two parts of the spectrum are related, but they are not the same.

The basic equations for Schwarzschild perturbations have been known for a long time. Regge and Wheeler derived the odd-parity gravitational equation~\cite{ReggeWheeler}, while Zerilli obtained the even-parity equation~\cite{Zerilli}. The relations between different forms of the Schwarzschild perturbation equations were studied in detail by Chandrasekhar~\cite{Chandrasekhar1975,ChandrasekharBook}. The wider framework of black-hole perturbation theory also includes the Teukolsky equation for rotating black holes~\cite{Teukolsky1973}.

The late-time behavior of black-hole perturbations was first studied in detail by Price~\cite{Price1972}. He showed that fields outside a Schwarzschild black hole decay with inverse powers of time. These decaying terms are now usually called Price tails. Later work showed that the decay rate depends on the field, the multipole number, the observation region, and the initial data~\cite{Andersson1997,BarackOri1999,Barack2000,PriceBurko}.Rigorous late-time asymptotics for scalar waves on spherically
symmetric stationary spacetimes were established in
Ref.~\cite{Angelopoulos2018}.

There is also a clear frequency-domain picture. Leaver showed that the black-hole response can be separated into different pieces, including quasinormal-mode poles and a branch-cut contribution~\cite{Leaver1986}. The relation between long-range potentials, branch points, and power-law tails was studied more generally by Ching, Leung, Suen, and Young~\cite{Ching1995}. Reviews of black-hole quasinormal modes and their physical meaning can be found in Refs.~\cite{KokkotasSchmidt,Berti2009}.

The Schwarzschild branch cut has been studied in much more detail in the last two decades. Casals and Ottewill developed analytic methods for the Green function on the branch cut and studied both low- and high-frequency behavior~\cite{CasalsOttewillPRL,CasalsOttewillLarge,CasalsOttewillBranch}. Related work also connected the branch cut with Green-function and self-force calculations~\cite{CasalsDolan}. Their later small-frequency analysis gave several terms in the late-time expansion for scalar, electromagnetic, and gravitational fields~\cite{CasalsOttewill}. In particular, it showed that logarithmic terms appear beyond the leading Price tail.

The Mano-Suzuki-Takasugi method is important for these calculations. It gives analytic series solutions of the Teukolsky and Regge-Wheeler equations and is well suited to small-frequency expansions~\cite{MSTTeukolsky,MSTRW}. These expansions will be used below as an input. We do not derive the Schwarzschild branch structure again.

There is also a mathematical theory of threshold behavior for Schr\"odinger operators. Resolvent expansions near zero energy were studied, for example, by Jensen and Kato and by Jensen and Nenciu~\cite{JensenKato,JensenNenciu}. These results show why zero frequency needs separate care. A resolvent near a threshold can contain powers, logarithms, or other nonanalytic terms. It cannot always be treated with the same meromorphic arguments that are used for isolated resonance poles.

A second part of the present problem concerns differential transformations. Darboux and supersymmetric transformations are standard tools for one-dimensional wave equations. Their effect on Jost functions was studied in early work on supersymmetric quantum mechanics~\cite{Talukdar1989}. Leung and collaborators developed the same ideas for open systems and quasinormal modes~\cite{Leung2001}. In black-hole perturbation theory, Glampedakis, Johnson, and Kennefick showed clearly how Darboux transformations appear in the Regge-Wheeler and Zerilli problems~\cite{Glampedakis2017}. These works also show that special frequencies, including zero-energy cases, require care.

The zero-frequency Schwarzschild equation has received renewed attention because of black-hole Love numbers. Static Schwarzschild black holes in four-dimensional general relativity have vanishing Love numbers. This result has been studied using direct response calculations, hidden symmetries, and ladder operators~\cite{Hui2021,Charalambous2021,Hui2022,BenAchour2022}. Recent work continues to develop this point of view and to clarify the role of ladder symmetries and static polynomial solutions~\cite{Sharma2026,DeLuca2026,Kumar2026}. These papers are important for the present work because the regular zero-frequency Regge--Wheeler solution is also the state that controls the transformation studied below.

The frequency-domain Schwarzschild Green function is also an active subject. Recent work has given new decompositions of the Green function and has studied its singular structure in more detail~\cite{Su2026,Rosato2026}. Aruquipa and Casals have constructed Regge-Wheeler and Teukolsky Green functions in Schwarzschild spacetime~\cite{Aruquipa2026}. Another recent study has shown how the frequency content of the source can strongly reduce a late-time tail~\cite{LeonVega2026}. These results make it especially important to separate an effect caused by the observable from an effect caused by the source.

The question studied here is the following. Let a black-hole master field $y$ satisfy a second-order radial equation. Suppose a physical quantity is obtained through a first-order differential map
\begin{equation}
    T(\omega)=A(x,\omega)+B(x,\omega)\partial_x .
    \label{eq:introT}
\end{equation}
How does this map change the zero-frequency branch contribution?

A related nonzero-frequency problem was studied in Ref.~\cite{SubhashPolePaper}. There the main object was the first-jet transformation determinant. An isolated zero of this determinant at one radial point can make the scalar equation for the transformed quantity singular. However, such a local zero does not by itself create a new quasinormal-mode pole. A genuine loss of a homogeneous solution occurs only when the transformation becomes globally degenerate. That analysis was carried out away from thresholds and branch points.

The present paper studies the missing zero-frequency case.

Our first result is general. Suppose the leading branch-cut part of the Green function has finite-rank range. Then the leading branch term in the transformed physical response disappears exactly when the limiting observable annihilates that threshold range. In the rank-one case this condition is simply
\begin{equation}
    T_0\phi_0=0,
    \label{eq:introTphi}
\end{equation}
where $\phi_0$ is the leading threshold state.

This condition also has a direct meaning for the first-jet transformation. If a regular first-order map kills a nonzero homogeneous threshold solution, then its first-jet determinant vanishes everywhere at that frequency. Thus suppression of the leading threshold state requires a global transformation degeneracy, not an isolated radial zero.

We then apply the result to the Schwarzschild Regge-Wheeler family. The horizon-regular zero-frequency solution is a finite polynomial. For each separated mode, all regular first-order maps that remove the leading threshold state have the form
\begin{equation}
    T_0=B(r)\left[\partial_{r_*}-\partial_{r_*}\ln\phi_{j\ell}(r)\right].
    \label{eq:introClass}
\end{equation}
For the canonical choice $B=1$, the transformation determinant is exactly
\begin{equation}
    D=\omega^2.
    \label{eq:introD}
\end{equation}

This gives a simple separation between the two spectral problems. At every nonzero frequency the canonical transformation is nondegenerate, so the nonzero quasinormal-mode problem is transported in the usual way. At $\omega=0$ the transformation becomes globally degenerate and removes the leading threshold state.

Using the known small-frequency Regge-Wheeler expansion, we then show that this zero is simple. The usual fixed-radius Price tail $t^{-2\ell-3}$ is therefore changed, for the one-sided transformed Green response, to $t^{-2\ell-4}$. The same change holds for a sourced solution when the leading threshold source overlap is nonzero.

The analysis uses several known ingredients, including Price tails,
the Schwarzschild branch cut, static polynomial solutions, and
zero-energy Darboux transformations. The point developed here is
the connection between them. For a general first-order response,
suppression of the leading threshold coefficient is tied to global
first-jet degeneracy. The Schwarzschild Regge-Wheeler problem gives
an exact example. The same map can preserve the nonzero-frequency
quasinormal-mode problem while changing the leading zero-frequency
tail. Thus two differential observables can have the same nonzero
quasinormal frequencies and still have different late-time tails.

The paper is organized as follows.
Section~\ref{sec:threshold} introduces the threshold Green-function
problem. Section~\ref{sec:degeneracy} relates threshold suppression
to global first-jet degeneracy. Section~\ref{sec:rw} studies the
zero-frequency Schwarzschild solutions.
Section~\ref{sec:classification} classifies the first-order
threshold suppressors, and Section~\ref{sec:determinant} gives
their exact determinant. Section~\ref{sec:tailshift} proves the
one-power change in the late-time tail.
Section~\ref{sec:data} discusses initial data and source
cancellation. Section~\ref{sec:darboux} compares the result with
Darboux theory, and Section~\ref{sec:checks} gives checks against
known Schwarzschild transformations. Section~\ref{sec:discussion}
discusses the interpretation and scope of the result. We end with
a short conclusion.

\section{Threshold response}
\label{sec:threshold}

We begin with a general second-order equation
\begin{equation}
    L(\omega)y=y''+p(x,\omega)y'+q(x,\omega)y,
    \label{eq:generalL}
\end{equation}
where $\omega$ is the frequency. The prime denotes differentiation with respect to $x$.

Let $R_L(\omega)$ denote the outgoing inverse with the physical boundary conditions. Near an ordinary resonance this inverse can be meromorphic. Near a threshold this need not be true. Powers, logarithms, and fractional powers can appear. This is standard in threshold scattering theory~\cite{JensenKato,JensenNenciu}.

We work on a fixed slit neighborhood of a threshold $\omega_0$. This fixes the branch of every logarithm and fractional power.

\subsection{Physical response and scalar transformed equation}

We study a first-order observable
\begin{equation}
    T(\omega)=A(x,\omega)+B(x,\omega)\partial_x .
    \label{eq:generalT}
\end{equation}
The physical transformed response is
\begin{equation}
    G_T(\omega)=T(\omega)R_L(\omega).
    \label{eq:physicalGT}
\end{equation}
This is not, in general, the same object as the unit-source Green function of a new scalar equation satisfied by $Ty$. This distinction is important both at poles and at thresholds.

If a source is present, the complete response is
\begin{equation}
    u(\omega)=T(\omega)R_L(\omega)J(\omega),
    \label{eq:completeResponse}
\end{equation}
where $J$ contains the physical source or the transformed initial data.

\subsection{Leading threshold coefficient}

Let $X$, $Y$, and $Z$ be Banach spaces. We regard the outgoing
resolvent and the observable as
\begin{equation}
    R_L(\omega):X\to Y,
    \qquad
    T(\omega):Y\to Z.
\end{equation}
We assume that these maps are bounded in a fixed slit neighborhood
of the threshold $\omega_0$.

Suppose that the jump of the resolvent has the leading expansion
\begin{equation}
    \Disc R_L(\omega)
    =
    f(\omega)P_0+E(\omega),
    \label{eq:leadingDisc}
\end{equation}
where $P_0\in\mathcal{B}(X,Y)$ is nonzero and
\begin{equation}
    \frac{\|E(\omega)\|_{\mathcal{B}(X,Y)}}{|f(\omega)|}
    \longrightarrow 0
    \qquad
    \text{as }\omega\to\omega_0
\end{equation}
along the cut. Here
$\Disc R_L=R_L^+-R_L^-$.

We also assume that $T$ is single-valued across the cut and regular
at the threshold in operator norm:
\begin{equation}
    T(\omega)
    =
    T_0+O(\omega-\omega_0),
    \label{eq:Tregular}
\end{equation}
with $T_0\in\mathcal{B}(Y,Z)$.

\begin{theorem}[Leading threshold map]
\label{thm:thresholdmap}
Under the assumptions above,
\begin{equation}
    \Disc\!\left[T(\omega)R_L(\omega)\right]=f(\omega)T_0P_0+o\!\left(f(\omega)\right).
    \label{eq:thresholdMap}
\end{equation}
Therefore the leading threshold contribution disappears if and only if
\begin{equation}
    T_0P_0=0.
    \label{eq:T0P0}
\end{equation}
\end{theorem}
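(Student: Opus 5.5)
Since $T$ is single-valued across the cut, the first step is to pull it out of the discontinuity. On the two sides of the cut we have $T^+(\omega)=T^-(\omega)=T(\omega)$, so
\begin{equation*}
\Disc\!\left[T(\omega)R_L(\omega)\right]=T(\omega)R_L^+(\omega)-T(\omega)R_L^-(\omega)=T(\omega)\,\Disc R_L(\omega).
\end{equation*}
Substituting the leading expansion \eqref{eq:leadingDisc} and the regularity \eqref{eq:Tregular}, I will write $T(\omega)=T_0+S(\omega)$ with $\|S(\omega)\|_{\mathcal{B}(Y,Z)}=O(|\omega-\omega_0|)$. This gives
\begin{equation*}
\Disc[TR_L]=f(\omega)T_0P_0+f(\omega)S(\omega)P_0+T(\omega)E(\omega).
\end{equation*}

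Next I bound the two remainder terms in $\mathcal{B}(X,Z)$ using submultiplicativity of operator norms. For the first, $\|f S P_0\|\le |f|\,\|S\|\,\|P_0\|=|f|\,O(|\omega-\omega_0|)=o(f)$. For the second, $T$ is bounded near the threshold, so $\|T\|\le C$ and $\|TE\|\le C\|E\|=o(f)$ by the hypothesis $\|E\|/|f|\to0$. Together these give \eqref{eq:thresholdMap}.

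For the equivalence, the direction ``if'' is immediate: when $T_0P_0=0$, the discontinuity is $o(f)$, so no term of order $f$ survives. For ``only if'', suppose $T_0P_0\neq0$. Dividing by $f$ gives
\begin{equation*}
\|\Disc[TR_L]\|/|f|\to\|T_0P_0\|>0,
\end{equation*}
so the leading contribution is exactly of order $f$ with nonzero coefficient $T_0P_0$.

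I expect the only delicate point to be interpretive rather than technical: ``the leading threshold contribution disappears'' must be read as ``the coefficient of $f(\omega)$ in the asymptotic expansion vanishes''. That coefficient is uniquely determined, because if $fA+o(f)=fB+o(f)$ then $A=B$. I will state this uniqueness explicitly so that the ``if and only if'' is unambiguous. I also need $f(\omega)\neq0$ near $\omega_0$ along the cut, which is implicit in the hypothesis on $\|E\|/|f|$.
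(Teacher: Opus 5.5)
Your proposal is correct and follows the paper's proof. You pull the single-valued $T$ out of the discontinuity, substitute $T=T_0+O(\omega-\omega_0)$ and $\Disc R_L=fP_0+E$, and read off the coefficient of $f$; your explicit operator-norm bounds on $fSP_0$ and $TE$, together with the uniqueness remark for the coefficient of $f$, make rigorous what the paper compresses into a single line.
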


\begin{proof}
Since $T$ is single-valued across the cut,
\begin{equation}
    \Disc(TR_L)=T\,\Disc R_L.
\end{equation}
Using Eqs.~\eqref{eq:leadingDisc} and \eqref{eq:Tregular},
\begin{align}
    \Disc(TR_L)
    &=
    \left[T_0+O(\omega-\omega_0)\right]
    \left[f(\omega)P_0+E(\omega)\right] \\
    &=
    f(\omega)T_0P_0+o\!\left(f(\omega)\right).
\end{align}
Therefore
\begin{equation}
    \Disc(TR_L)=o\!\left(f(\omega)\right)
\end{equation}
if and only if $T_0P_0=0$.
\end{proof}

The theorem is conditional on the existence of the stated threshold
expansion. We do not prove such an expansion for a general operator
$L$. In the Schwarzschild application below we instead use the known
fixed-radius small-frequency expansion of the Regge-Wheeler Green
function. The same argument then applies directly to its kernel. This theorem is elementary once the threshold expansion is known. Its value is that it identifies the correct object that must vanish. The condition is an operator condition, not a zero at one radial point.

\subsection{Rank-one threshold state}

A particularly useful case is
\begin{equation}
    P_0=\phi_0\otimes\lambda_0,
    \label{eq:rankone}
\end{equation}
where $\phi_0$ is a homogeneous threshold solution and $\lambda_0$ is a nonzero source functional.

\begin{corollary}[Rank-one criterion]
\label{cor:rankone}
If Eq.~\eqref{eq:rankone} holds, then the leading threshold term is suppressed for every source if and only if
\begin{equation}
    T_0\phi_0=0.
    \label{eq:Tphi0}
\end{equation}
\end{corollary}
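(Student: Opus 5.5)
The plan is to reduce the corollary to Theorem~\ref{thm:thresholdmap} and then to an elementary fact about rank-one operators. Here $\phi_0\in Y$ and $\lambda_0\in X^*$, and $P_0=\phi_0\otimes\lambda_0$ acts as $P_0J=\lambda_0(J)\,\phi_0$. With this rank-one form,
\begin{equation*}
    T_0P_0=(T_0\phi_0)\otimes\lambda_0,
    \qquad
    T_0P_0J=\lambda_0(J)\,T_0\phi_0 .
\end{equation*}
By Theorem~\ref{thm:thresholdmap}, applied to the response $u=TR_LJ$, the discontinuity for a fixed source $J\in X$ is
\begin{equation*}
    \Disc\left[T(\omega)R_L(\omega)J\right]=f(\omega)\,\lambda_0(J)\,T_0\phi_0+o\!\left(f(\omega)\right)\|J\|.
\end{equation*}
This holds because applying a bounded operator of the form $o(f)$ to a fixed $J$ gives $o(f)$ in $Z$. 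The phrase ``suppressed for every source'' will mean that the coefficient $\lambda_0(J)\,T_0\phi_0$ of $f(\omega)$ vanishes for every $J\in X$.

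For the ``if'' direction, suppose $T_0\phi_0=0$. Then $T_0P_0=0$ as an operator, so the leading term vanishes for every $J$.

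For the ``only if'' direction, suppose the leading term vanishes for every $J$, so that $\lambda_0(J)\,T_0\phi_0=0$ for all $J$. Since $\lambda_0$ is a nonzero functional, choose $J_*$ with $\lambda_0(J_*)\neq 0$; dividing by $\lambda_0(J_*)$ gives $T_0\phi_0=0$. Equivalently, $T_0P_0=0$ iff $T_0\phi_0=0$, which is exactly the operator criterion of Theorem~\ref{thm:thresholdmap}. The same argument shows that for a single source with $\lambda_0(J)\neq0$, the leading term is suppressed iff $T_0\phi_0=0$.

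The only point needing care is matching ``suppressed for every source'' to the operator statement $T_0P_0=0$. I would argue pointwise in $J$ as above, not through the operator norm. This avoids any uniformity issue, since the pointwise statement for all $J$ is equivalent to the operator identity $T_0P_0=0$. Nondegeneracy of $\lambda_0$ is what makes the ``only if'' direction work; without it the corollary would fail trivially.
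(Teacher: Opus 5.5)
Your proposal is correct and follows the paper's proof: both write $T_0P_0=(T_0\phi_0)\otimes\lambda_0$ and use $\lambda_0\neq0$ to conclude that this vanishes exactly when $T_0\phi_0=0$. Your pointwise-in-$J$ argument spells out what the paper leaves implicit, namely that ``suppressed for every source'' means the operator identity $T_0P_0=0$ from Theorem~\ref{thm:thresholdmap}.
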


\begin{proof}
Equation~\eqref{eq:T0P0} becomes $T_0P_0=(T_0\phi_0)\otimes\lambda_0$. Since $\lambda_0\neq0$, this vanishes exactly when $T_0\phi_0=0$.
\end{proof}

A special source may still remove the leading tail even when $T_0\phi_0\neq0$. That is a source cancellation. It should not be confused with a cancellation produced by the observable for every source.

\subsection{General polyhomogeneous expansion}

For completeness, suppose
\begin{equation}
    \Disc R_L\sim\sum_j\zeta^{\alpha_j}P_j(\ln\zeta),\qquad \zeta=\omega-\omega_0,
    \label{eq:polyR}
\end{equation}
and
\begin{align}
    T&\sim\sum_a\zeta^{\beta_a}T_a,\\
    J&\sim\sum_b\zeta^{\delta_b}J_b.
\end{align}
The first surviving exponent of the full response is
\begin{equation}
\begin{split}
    \gamma_*=\min\{&\alpha_j+\beta_a+\delta_b:\\
    &\sum_{\alpha_j+\beta_a+\delta_b=\gamma_*}T_aP_jJ_b\neq0\}.
    \label{eq:generalGamma}
\end{split}
\end{equation}
This form is useful when fractional powers are present. For integer-spaced Schwarzschild expansions it reduces to the simpler sum of the original threshold order, the observable order, the source order, and any remaining cancellation order.

If
\begin{equation}
    \Disc u(-i\sigma)\sim\sigma^\gamma(\ln\sigma)^k,\qquad \gamma>-1,
\end{equation}
then Laplace inversion gives
\begin{equation}
    u(t)\sim t^{-\gamma-1}Q_k(\ln t),
    \label{eq:laplaceTail}
\end{equation}
where $Q_k$ is a polynomial of degree at most $k$. This follows from
\begin{equation}
    \int_0^\infty e^{-\sigma t}\sigma^\gamma(\ln\sigma)^k\,\dd\sigma
    =\frac{\partial^k}{\partial\gamma^k}\left[\Gamma(\gamma+1)t^{-\gamma-1}\right].
    \label{eq:laplaceIdentity}
\end{equation}

\section{First-jet degeneracy}
\label{sec:degeneracy}

We now connect the threshold condition with the transformation determinant. For
\begin{equation}
    z=Ay+By',
\end{equation}
use the original equation to write
\begin{equation}
    z'=Cy+Ey',
\end{equation}
where
\begin{align}
    C&=A'-Bq,\\
    E&=A+B'-Bp.
\end{align}
The first jets satisfy
\begin{equation}
    \begin{pmatrix}z\\z'\end{pmatrix}
    =\begin{pmatrix}A&B\\C&E\end{pmatrix}
    \begin{pmatrix}y\\y'\end{pmatrix}.
\end{equation}
The determinant is
\begin{equation}
    D(x,\omega)=AE-BC,
    \label{eq:Dgeneral}
\end{equation}
or
\begin{equation}
    D=A^2+AB'-ABp-BA'+B^2q.
    \label{eq:Dexpanded}
\end{equation}
For two homogeneous solutions $y_1,y_2$,
\begin{equation}
    W[Ty_1,Ty_2]=D\,W[y_1,y_2].
    \label{eq:WronskianD}
\end{equation}

\begin{theorem}[Threshold suppression implies global degeneracy]
\label{thm:globaldeg}
Let the original second-order equation be regular on a connected interval. Let $\phi_0$ be a nonzero homogeneous solution at the threshold. If
\begin{equation}
    T_0\phi_0=0,
\end{equation}
then
\begin{equation}
    D(x,\omega_0)\equiv0
\end{equation}
on that interval.
\end{theorem}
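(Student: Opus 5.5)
The plan is to show that the first-jet matrix at $\omega=\omega_0$ annihilates the nonzero vector $(\phi_0,\phi_0')^{\mathsf T}$ at every point of the interval. Its determinant must then vanish pointwise.

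Set $z=T_0\phi_0=A_0\phi_0+B_0\phi_0'$, where $A_0=A(x,\omega_0)$ and $B_0=B(x,\omega_0)$. By hypothesis $z\equiv0$ on the interval, so also $z'\equiv0$. Since $\phi_0$ solves $L(\omega_0)\phi_0=0$, the computation leading to Eq.~\eqref{eq:Dgeneral} applies at $\omega_0$. It gives $z'=C_0\phi_0+E_0\phi_0'$, with $C_0=A_0'-B_0q$ and $E_0=A_0+B_0'-B_0p$, all evaluated at $\omega_0$. Therefore, at each point $x$,
\begin{equation}
    \begin{pmatrix}A_0&B_0\\C_0&E_0\end{pmatrix}
    \begin{pmatrix}\phi_0\\ \phi_0'\end{pmatrix}
    =\begin{pmatrix}0\\0\end{pmatrix}.
\end{equation}

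The key step is that the vector $(\phi_0(x),\phi_0'(x))$ is nonzero at every point of the interval, not merely somewhere. This follows from uniqueness for the regular linear second-order equation $L(\omega_0)y=0$ on a connected interval. If both $\phi_0(x_1)$ and $\phi_0'(x_1)$ vanished at some $x_1$, then $\phi_0\equiv0$, contradicting $\phi_0\neq0$. Hence the matrix has a nontrivial kernel at every $x$, and $D(x,\omega_0)=A_0E_0-B_0C_0=0$ for all $x$ in the interval.

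The only delicate point is regularity. We need $p$ and $q$ continuous at $\omega_0$ so that existence and uniqueness hold. We also need $A$ and $B$ to be $C^1$ in $x$, so that $z'$ and the entries $C_0,E_0$ make sense. Both are implicit in the hypothesis that the equation is regular on the interval and that $T$ is a regular first-order map.

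An alternative route uses the Wronskian identity \eqref{eq:WronskianD}. Choose a second solution $\psi$ with $W[\phi_0,\psi]\neq0$. Then $W[T_0\phi_0,T_0\psi]=0$, so $D\,W[\phi_0,\psi]\equiv0$ and hence $D\equiv0$. This gives the same conclusion, and both routes rely on the same regularity assumptions.
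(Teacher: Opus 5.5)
Your argument is correct, and your second route is exactly the paper's proof. The paper chooses $\psi_0$ independent of $\phi_0$, applies Eq.~\eqref{eq:WronskianD} to get $D\,W[\phi_0,\psi_0]=W[T_0\phi_0,T_0\psi_0]=0$, and uses that the Wronskian never vanishes on the connected regular interval.

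Your main route differs in that it works pointwise with the first-jet matrix rather than with the Wronskian identity. Since $T_0\phi_0\equiv0$, its derivative also vanishes, so the matrix at $\omega_0$ annihilates $(\phi_0,\phi_0')$. Uniqueness for the regular initial-value problem shows this vector is nonzero at every point, so the matrix is singular everywhere.

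The difference is modest. Eq.~\eqref{eq:WronskianD} is just multiplicativity of determinants applied to that same matrix. Both versions also rest on uniqueness: yours uses it directly, the paper's through the nonvanishing of the Wronskian (Abel's formula). Your version needs no second solution and no appeal to the Wronskian identity. It also keeps the kernel of the jet matrix in view, which is the natural object for the converse discussed after the theorem. The paper's version, in turn, is a one-line corollary of an identity it has already recorded.
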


\begin{proof}
Choose a second homogeneous solution $\psi_0$ that is independent of $\phi_0$. Then $W[\phi_0,\psi_0]\neq0$ on the connected regular interval. Using Eq.~\eqref{eq:WronskianD},
\begin{equation}
    D\,W[\phi_0,\psi_0]=W[T_0\phi_0,T_0\psi_0]=0.
\end{equation}
Hence $D\equiv0$.
\end{proof}

The converse needs care. If $D(\cdot,\omega_0)\equiv0$, then some nonzero homogeneous solution lies in the kernel of $T_0$. That solution does not have to be the particular threshold state that appears in $P_0$. Thus global degeneracy is necessary for universal suppression of a rank-one leading threshold state, but it is not sufficient unless the killed solution is the correct one.

\section{Schwarzschild Regge-Wheeler modes}
\label{sec:rw}

We now apply the general result to Schwarzschild spacetime. Let
\begin{equation}
    F(r)=1-\frac{2M}{r},
\end{equation}
and define the tortoise coordinate by
\begin{equation}
    \frac{\dd r_*}{\dd r}=\frac{1}{F}.
\end{equation}
For spin magnitude $j=0,1,2$ and multipole $\ell\ge j$, the generalized Regge-Wheeler equation is
\begin{equation}
    \frac{\dd^2 f}{\dd r_*^2}+\left[\omega^2-V_{j\ell}(r)\right]f=0,
    \label{eq:RW}
\end{equation}
with
\begin{equation}
    V_{j\ell}=F\left[\frac{\ell(\ell+1)}{r^2}+\frac{2M(1-j^2)}{r^3}\right].
    \label{eq:RWpotential}
\end{equation}
The cases $j=0,1,2$ correspond to the scalar, electromagnetic, and axial gravitational sectors.

\subsection{The static solution}

Set $x=r/(2M)$. At $\omega=0$, Eq.~\eqref{eq:RW} becomes
\begin{equation}
    x^2(x-1)\phi''+x\phi'-\left[\ell(\ell+1)x+1-j^2\right]\phi=0,
    \label{eq:staticX}
\end{equation}
where the primes in this subsection denote $x$ derivatives.

Put $\phi=x^{j+1}u$. Then $u$ satisfies
\begin{equation}
\begin{split}
    x(1-x)u''+\left[2j+1-(2j+2)x\right]u'\\
    -(j-\ell)(j+\ell+1)u=0.
    \label{eq:hypergeom}
\end{split}
\end{equation}
Let $n=\ell-j$. The horizon-regular solution normalized by $\phi_{j\ell}(2M)=1$ is
\begin{equation}
\begin{split}
    \phi_{j\ell}(x)={}&(-1)^n\frac{(\ell+j)!}{n!(2j)!}x^{j+1}\\
    &\times{}_2F_1\!\left(-n,\ell+j+1;2j+1;x\right).
    \label{eq:staticPolynomial}
\end{split}
\end{equation}
Since $-n$ is a nonpositive integer, the hypergeometric series terminates. Thus $\phi_{j\ell}$ is a polynomial of degree $\ell+1$. This polynomial truncation is known and is closely related to the static Love-number structure of Schwarzschild black holes~\cite{Hui2021,Hui2022,Kumar2026}.

\subsection{No exterior zero}

\begin{proposition}
\label{prop:positive}
The horizon-regular static solution satisfies
\begin{equation}
    \phi_{j\ell}(r)>0
\end{equation}
for every $r>2M$.
\end{proposition}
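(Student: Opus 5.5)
The plan is to argue directly from the static equation \eqref{eq:staticX} with a maximum-principle argument, rather than from the explicit hypergeometric form \eqref{eq:staticPolynomial}. Since $\phi_{j\ell}$ is a polynomial, it is smooth up to and including the horizon $x=1$. That lets us evaluate \eqref{eq:staticX} at the regular singular point. Setting $x=1$ kills the $\phi''$ term and gives
\begin{equation*}
    \phi_{j\ell}'(1)=\left[\ell(\ell+1)+1-j^2\right]\phi_{j\ell}(1)=\ell(\ell+1)+1-j^2 .
\end{equation*}
Because $\ell\ge j$, we have $\ell(\ell+1)\ge j^2+j$, so $\phi_{j\ell}'(1)\ge j+1>0$. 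In the same way, the coefficient
\begin{equation*}
    c(x)=\ell(\ell+1)x+1-j^2
\end{equation*}
satisfies $c(x)\ge c(1)>0$ for all $x\ge1$. Hence $\phi_{j\ell}$ is positive and strictly increasing immediately to the right of $x=1$.

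Next, suppose for contradiction that $\phi_{j\ell}$ vanishes somewhere in $x>1$, and let $x_1>1$ be the first such zero. Then $\phi_{j\ell}>0$ on $[1,x_1)$. It starts at $1$ with positive slope and returns to $0$ at $x_1$. By continuity it therefore attains an interior maximum at some $x_m\in(1,x_1)$. At that point $\phi'=0$ and $\phi''\le0$. But \eqref{eq:staticX} at $x_m$ reads
\begin{equation*}
    x_m^2(x_m-1)\,\phi''(x_m)=c(x_m)\,\phi(x_m)>0,
\end{equation*}
and $x_m^2(x_m-1)>0$, so $\phi''(x_m)>0$. This is a contradiction. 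Hence $\phi_{j\ell}(x)>0$ for all $x>1$, that is, for all $r>2M$.

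The same argument also gives that $\phi_{j\ell}$ is monotonically increasing on $x>1$. At any critical point with $\phi>0$ we have $\phi''>0$, so no local maximum can occur. This monotonicity may be convenient later, but it is not needed for the proposition.

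The only delicate step is the behaviour at the singular endpoint $x=1$. There the leading coefficient of \eqref{eq:staticX} vanishes, so the usual maximum principle does not apply directly, and we need to know the sign of $\phi_{j\ell}'(1)$. This is handled by the fact, established above, that the horizon-regular solution is a polynomial. We may therefore evaluate the equation at $x=1$ to read off $\phi'(1)$, using the normalization $\phi_{j\ell}(2M)=1$. The other Frobenius branch at $x=1$ behaves like $\ln(x-1)$, and this is the regularity that excludes it. If one preferred to avoid that point, a fallback would be to check $\phi'(1)>0$ directly from the terminating series in \eqref{eq:staticPolynomial}.
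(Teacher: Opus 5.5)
Your proof is correct. It uses the same underlying principle as the paper, namely that a positive potential prevents a positive solution from turning over. It differs in how the horizon endpoint is handled.

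The paper works in the tortoise coordinate. There the horizon sits at $r_*\to-\infty$ and the equation is $\partial_{r_*}^2\phi=V_{j\ell}\phi$, with no first-derivative term. It uses $\phi\to1$, $\partial_{r_*}\phi\to0$ and the exponential decay of $V_{j\ell}$ at the horizon to write $\partial_{r_*}\phi(r_*)=\int_{-\infty}^{r_*}V_{j\ell}\phi\,\dd s$. This is positive up to any putative first zero, so $\phi$ increases from $1$ and cannot vanish.

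You instead keep the horizon at the finite regular singular point $x=1$. There the first-order horizon data are not suppressed by the factor $F$ in $\partial_{r_*}=F\partial_r$. Regularity of $\phi_{j\ell}$ at $x=1$ lets you read off $\phi_{j\ell}'(1)=\ell(\ell+1)+1-j^2\ge j+1$ directly from the equation. A first-zero, interior-maximum argument at finite points then finishes the proof. The $x\phi'$ term causes no trouble because it vanishes at critical points.

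The two routes trade off as follows.
\begin{itemize}
\item The paper's route needs only the asymptotic horizon data and the sign of $V_{j\ell}$, not the explicit polynomial. It gives monotonicity, and $\phi_{j\ell}\ge1$, in the same stroke. The cost is an improper integral from $-\infty$, which must be justified by the decay of the potential.
\item Your route avoids all asymptotics and improper integrals. It relies instead on $\phi_{j\ell}$ being $C^2$ at $x=1$, which holds because it is the polynomial, or equivalently the analytic Frobenius branch, as you note. It needs a separate short argument for monotonicity, which you sketch correctly.
\end{itemize}
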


\begin{proof}
For $\ell\ge j$,
\begin{equation}
    \ell(\ell+1)+1-j^2\ge j+1>0.
\end{equation}
Hence
\begin{equation}
    V_{j\ell}(r)>0
\end{equation}
for every $r>2M$. At zero frequency the equation is
\begin{equation}
    \frac{\dd^2\phi}{\dd r_*^2}
    =
    V_{j\ell}\phi.
    \label{eq:staticConvex}
\end{equation}

The horizon-regular normalization gives
\begin{equation}
    \phi\to1,
    \qquad
    \frac{\dd\phi}{\dd r_*}\to0
    \qquad
    \text{as }r_*\to-\infty.
\end{equation}

Suppose, for contradiction, that $\phi$ has a first zero at a finite
point $r_{*,0}$. Then $\phi>0$ for every $r_*<r_{*,0}$. On this
interval Eq.~\eqref{eq:staticConvex} gives
\begin{equation}
    \phi''=V_{j\ell}\phi>0.
\end{equation}
Since the potential tends to zero exponentially at the horizon, the
horizon conditions allow us to integrate from $-\infty$:
\begin{equation}
    \phi'(r_*)
    =
    \int_{-\infty}^{r_*}
    V_{j\ell}(s)\phi(s)\,\dd s.
\end{equation}
The integrand is positive before the assumed first zero. Therefore
\begin{equation}
    \phi'(r_*)>0
\end{equation}
for every finite $r_*<r_{*,0}$. It follows that $\phi$ is strictly
increasing from its horizon value $1$. Hence it cannot reach zero at
$r_{*,0}$. This is a contradiction.

Therefore
\begin{equation}
    \phi_{j\ell}(r)>0
\end{equation}
for every $r>2M$.
\end{proof}

Therefore $\partial_{r_*}\ln\phi_{j\ell}$ is regular over the whole exterior region. For the lowest physical multipoles,
\begin{align}
    \phi_{00}&=x,\\
    \phi_{11}&=x^2,\\
    \phi_{22}&=x^3.
    \label{eq:lowestPhi}
\end{align}

\section{Classification of threshold suppressors}
\label{sec:classification}

For the Schwarzschild problem, the branch-cut discontinuity of the
radial Green function has a factorized form in terms of the ingoing
solution~\cite{CasalsOttewill,CasalsOttewillBranch}. At fixed exterior
radii we write
\begin{equation}
    \Disc G_{j\ell}(r,r';-i\sigma)
    =
    \mathcal{B}_{j\ell}(\sigma)
    f_{j\ell}(r,-i\sigma)
    f_{j\ell}(r',-i\sigma),
    \label{eq:BCfactorized}
\end{equation}
where the scalar branch-cut factor contains the Wronskian and
branch-cut strength. Its small-frequency behavior is
\begin{equation}
    \mathcal{B}_{j\ell}(\sigma)
    =
    C_{j\ell}\sigma^{2\ell+2}
    +O(\sigma^{2\ell+3}),
    \qquad
    C_{j\ell}\neq0.
    \label{eq:BCfactor}
\end{equation}
Since
\begin{equation}
    f_{j\ell}(r,-i\sigma)
    =
    \phi_{j\ell}(r)+O(\sigma),
\end{equation}
the leading fixed-radius discontinuity is therefore
\begin{equation}
\begin{split}
    \Disc G_{j\ell}
    ={}&
    C_{j\ell}\sigma^{2\ell+2}
    \phi_{j\ell}(r)\phi_{j\ell}(r') \\
    &+O(\sigma^{2\ell+3}).
    \label{eq:RWleadingDisc}
\end{split}
\end{equation}
Thus the leading threshold coefficient is rank one.

\begin{theorem}[First-order threshold suppressors]
\label{thm:classification}
Let
\begin{equation}
    T_0=A(r)+B(r)\partial_{r_*}
    \label{eq:T0RW}
\end{equation}
be regular outside the Schwarzschild horizon, with $B$ not identically zero. The leading fixed-radius threshold term in Eq.~\eqref{eq:RWleadingDisc} is removed for every source point if and only if
\begin{equation}
    T_0=B(r)\left[\partial_{r_*}-\partial_{r_*}\ln\phi_{j\ell}(r)\right].
    \label{eq:classifiedT}
\end{equation}
\end{theorem}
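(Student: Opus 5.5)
The plan is to reduce the statement to the rank-one criterion of Corollary~\ref{cor:rankone} and then solve a first-order linear condition for $A$. By Eq.~\eqref{eq:RWleadingDisc}, the leading fixed-radius coefficient is $C_{j\ell}\,\phi_{j\ell}(r)\phi_{j\ell}(r')$ with $C_{j\ell}\neq0$. Acting with $T_0$ in the observation variable $r$ turns this coefficient into $C_{j\ell}\,(T_0\phi_{j\ell})(r)\,\phi_{j\ell}(r')$. The factor $\phi_{j\ell}(r')$ plays the role of $\lambda_0$. It is nonzero for every source point $r'>2M$ by Proposition~\ref{prop:positive}, and in particular it is not identically zero.

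Since $T_0$ is single-valued across the cut, the argument of Theorem~\ref{thm:thresholdmap} applies directly to the kernel. Removal of the leading term for every source point is therefore equivalent to
\begin{equation*}
A(r)\phi_{j\ell}(r)+B(r)\,\partial_{r_*}\phi_{j\ell}(r)=0 \qquad\text{for all } r>2M.
\end{equation*}
One point needs a remark here. The $O(\sigma^{2\ell+3})$ remainder must stay subleading after $T_0$ is applied. This requires the expansion of $f_{j\ell}(r,-i\sigma)$ to be differentiable in $r$ at fixed exterior radius. I will take that from the known small-frequency Regge--Wheeler expansion, in the same way the paper uses it as input.

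For the ``if'' direction, substitute $A=-B\,\partial_{r_*}\ln\phi_{j\ell}$ into the condition. This gives $B\,(\partial_{r_*}\phi_{j\ell}-\phi_{j\ell}\,\partial_{r_*}\ln\phi_{j\ell})=0$. The logarithmic derivative is well defined and regular on the whole exterior because $\phi_{j\ell}>0$ there, by Proposition~\ref{prop:positive}. Hence the operator in Eq.~\eqref{eq:classifiedT} is regular outside the horizon whenever $B$ is.

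For the ``only if'' direction, divide the pointwise condition by $\phi_{j\ell}(r)$, which never vanishes for $r>2M$. This gives $A=-B\,\partial_{r_*}\ln\phi_{j\ell}$ at every exterior point, so $T_0$ has exactly the form of Eq.~\eqref{eq:classifiedT}. The hypothesis that $B\not\equiv0$ is needed only to exclude the trivial operator $T_0=0$. If $B\equiv0$, the condition forces $A\equiv0$, so the classification is stated for genuinely first-order maps.

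I would add a consistency check against Theorem~\ref{thm:globaldeg}. Any such $T_0$ kills the nonzero threshold solution $\phi_{j\ell}$, so $D(\cdot,0)\equiv0$, as that theorem requires.

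The main obstacle is not the algebra but justifying that the kernel-level condition is pointwise in $r$. Concretely, this means passing from ``the $\sigma^{2\ell+2}$ coefficient of $T_0\Disc G_{j\ell}$ vanishes'' to $(T_0\phi_{j\ell})(r)=0$. That step needs the uniform-in-$r$ (on compact exterior sets) differentiability of the remainder in Eq.~\eqref{eq:RWleadingDisc}. I would handle it by citing the MST/Casals--Ottewill expansions, which are analytic in $r$ term by term, rather than proving it afresh.
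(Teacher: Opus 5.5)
Your proposal is correct and follows the paper's proof. You reduce via Corollary~\ref{cor:rankone} to $T_0\phi_{j\ell}=0$, i.e.\ $A\phi_{j\ell}+B\,\partial_{r_*}\phi_{j\ell}=0$, and then divide by $\phi_{j\ell}>0$ using Proposition~\ref{prop:positive}. Your extra remarks, on differentiating the small-frequency remainder in $r$ and on the explicit ``if'' direction, are sensible refinements that the paper leaves implicit.
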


\begin{proof}
By Corollary~\ref{cor:rankone}, universal suppression is equivalent to $T_0\phi_{j\ell}=0$. Hence
\begin{equation}
    A\phi_{j\ell}+B\phi_{j\ell}'=0.
\end{equation}
Since $\phi_{j\ell}$ has no exterior zero,
\begin{equation}
    A=-B\frac{\phi_{j\ell}'}{\phi_{j\ell}},
\end{equation}
which gives Eq.~\eqref{eq:classifiedT}.
\end{proof}

Thus the canonical suppressor is
\begin{equation}
    \boxed{\cA_{j\ell}=\partial_{r_*}-\partial_{r_*}\ln\phi_{j\ell}}.
    \label{eq:canonicalA}
\end{equation}
For the lowest scalar, electromagnetic, and gravitational modes,
\begin{align}
    \cA_{00}&=\partial_{r_*}-\frac{F}{r},\\
    \cA_{11}&=\partial_{r_*}-\frac{2F}{r},\\
    \cA_{22}&=\partial_{r_*}-\frac{3F}{r}.
    \label{eq:lowestA}
\end{align}
These operators are mode dependent. They should therefore be understood as transformations of separated radial master equations, not as one universal local four-dimensional field operator.

\section{Exact threshold degeneracy}
\label{sec:determinant}

Consider
\begin{equation}
    T=B(r)(\partial_{r_*}-w),\qquad w=\frac{\phi_{j\ell}'}{\phi_{j\ell}},
    \label{eq:generalSuppressor}
\end{equation}
where the prime now denotes an $r_*$ derivative. For the Regge-Wheeler equation, $p=0$ and $q=\omega^2-V$. Since $\phi_{j\ell}''=V\phi_{j\ell}$,
\begin{equation}
    w'+w^2=V.
    \label{eq:Riccati}
\end{equation}

\begin{theorem}[Exact determinant]
\label{thm:Domega}
For the threshold suppressor Eq.~\eqref{eq:generalSuppressor}, the first-jet determinant is
\begin{equation}
    \boxed{D(r,\omega)=B(r)^2\omega^2}.
    \label{eq:Domega}
\end{equation}
For the canonical choice $B=1$, $D=\omega^2$.
\end{theorem}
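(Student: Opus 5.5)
We substitute the suppressor into the general determinant formula, Eq.~\eqref{eq:Dexpanded}, written in the tortoise coordinate $x=r_*$. For the Regge-Wheeler equation~\eqref{eq:RW} we have $p=0$ and $q=\omega^2-V_{j\ell}$. The operator~\eqref{eq:generalSuppressor} has coefficients $A=-Bw$ and $B=B(r)$, with primes denoting $r_*$ derivatives throughout. Both are regular on the exterior, because Proposition~\ref{prop:positive} guarantees $\phi_{j\ell}>0$ there, so $w$ is smooth.

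The computation then proceeds term by term. With $p=0$,
\begin{equation}
    D=A^2+AB'-BA'+B^2q .
\end{equation}
The individual pieces are
\begin{equation}
    A^2=B^2w^2,\qquad AB'=-BB'w,\qquad A'=-B'w-Bw',
\end{equation}
which gives
\begin{equation}
    -BA'=BB'w+B^2w'.
\end{equation}
The mixed terms $\pm BB'w$ cancel identically. What remains is
\begin{equation}
    D=B^2\left(w^2+w'\right)+B^2\left(\omega^2-V_{j\ell}\right).
\end{equation}
By the Riccati identity~\eqref{eq:Riccati}, $w'+w^2=V_{j\ell}$, so the potential terms cancel and $D=B^2\omega^2$. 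Setting $B=1$ gives $D=\omega^2$.

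The only point needing care is that Eq.~\eqref{eq:Dexpanded} was derived by eliminating $y''$ using the original equation in the variable $x$. We must therefore use the $r_*$ form of Eq.~\eqref{eq:RW}, where $p=0$, rather than the $r$ form, where $p\neq0$. The expression for $D$ should also be treated as a function of $r_*$, with $B(r)$ differentiated along $r_*$.

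As a consistency check, the result is compatible with Theorem~\ref{thm:globaldeg}. At $\omega=0$ we get $D\equiv0$ because $T_0\phi_{j\ell}=0$, while for $\omega\neq0$ we get $D\neq0$ wherever $B\neq0$.
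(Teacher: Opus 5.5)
Your proposal is correct and follows the paper's proof: you substitute $A=-Bw$ into Eq.~\eqref{eq:Dexpanded} with $p=0$ and $q=\omega^2-V_{j\ell}$, cancel the $BB'w$ terms to get $D=B^2(w^2+w'+\omega^2-V_{j\ell})$, and finish with the Riccati identity~\eqref{eq:Riccati}. Your remark that the $r_*$ form of the equation (where $p=0$) must be used matches the paper's convention.
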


\begin{proof}
Use $A=-Bw$ in Eq.~\eqref{eq:Dexpanded}. The terms containing $B'$ cancel. The remaining result is
\begin{equation}
    D=B^2(w^2+w'+\omega^2-V).
\end{equation}
Equation~\eqref{eq:Riccati} gives Eq.~\eqref{eq:Domega}.
\end{proof}

For the canonical suppressor, $D\neq0$ for every $\omega\neq0$, while $D(\cdot,0)\equiv0$. Thus the canonical transformation becomes globally degenerate exactly at the frequency where it removes the leading threshold state.

At nonzero frequency, $w\to0$ at the horizon and at infinity. Therefore
\begin{align}
    \cA_{j\ell}e^{-i\omega r_*}&\sim-i\omega e^{-i\omega r_*},\\
    \cA_{j\ell}e^{+i\omega r_*}&\sim+i\omega e^{+i\omega r_*}.
\end{align}
The physical ingoing and outgoing boundary lines are therefore preserved for nonzero $\omega$. The nonzero-frequency quasinormal-mode problem is transported, while $\omega=0$ remains exceptional.

\section{The late-time tail shifts by one power}
\label{sec:tailshift}

We now show that the canonical suppressor removes exactly one threshold order. Write $\epsilon=2M\sigma$ and expand the ingoing solution as
\begin{equation}
    f_{j\ell}(r,-i\sigma)=\phi_{j\ell}(r)+\epsilon f_{1,j\ell}(r)+O(\epsilon^2).
    \label{eq:fExpansion}
\end{equation}
Such small-frequency expansions follow naturally from the MST construction~\cite{MSTRW,CasalsOttewill}.

Because the radial equation depends on frequency through $\omega^2$,
\begin{align}
    L_0\phi_{j\ell}&=0,\\
    L_0f_{1,j\ell}&=0.
\end{align}
The ingoing normalization at the horizon is $f_{j\ell}\sim e^{-i\omega r_*}$. On the negative imaginary axis,
\begin{equation}
    e^{-i\omega r_*}=e^{-\sigma r_*}=1-\epsilon\frac{r_*}{2M}+O(\epsilon^2).
\end{equation}
Therefore
\begin{equation}
    W[\phi_{j\ell},f_{1,j\ell}]=-\frac{1}{2M}.
    \label{eq:staticWronskian}
\end{equation}
Now apply the canonical suppressor:
\begin{equation}
    \cA_{j\ell}f_{1,j\ell}=\frac{W[\phi_{j\ell},f_{1,j\ell}]}{\phi_{j\ell}}.
\end{equation}
Using Eq.~\eqref{eq:staticWronskian},
\begin{equation}
    \boxed{\cA_{j\ell}f_{1,j\ell}=-\frac{1}{2M\phi_{j\ell}}}.
    \label{eq:Aonf1}
\end{equation}
It is nonzero everywhere outside the horizon. Hence
\begin{equation}
    \cA_{j\ell}f_{j\ell}(r,-i\sigma)=-\frac{\sigma}{\phi_{j\ell}(r)}+O(\sigma^2).
    \label{eq:Afsmall}
\end{equation}
The zero is therefore simple.

\begin{theorem}[One-power tail shift]
\label{thm:tailshift}
At fixed radii $2M<r,r'<\infty$, the ordinary Regge-Wheeler branch Green kernel has
\begin{equation}
    G_{j\ell}^{\mathrm{BC}}(t;r,r')\sim t^{-2\ell-3}.
\end{equation}
For the one-sided canonical transformed kernel,
\begin{equation}
    \cA_{j\ell,r}G_{j\ell}^{\mathrm{BC}}(t;r,r')\sim t^{-2\ell-4}.
    \label{eq:tailResult}
\end{equation}
If this kernel is integrated against a physical source or initial-data term, the same leading power holds provided the corresponding leading threshold overlap is nonzero.
\end{theorem}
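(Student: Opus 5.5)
The plan is to combine the factorized branch-cut discontinuity Eq.~\eqref{eq:BCfactorized} with the small-frequency expansion of the ingoing solution, and then convert to time with the Laplace identity Eq.~\eqref{eq:laplaceTail}. The first step recovers the standard tail. By Eq.~\eqref{eq:RWleadingDisc}, at fixed $r,r'$ the discontinuity on the negative imaginary axis is $C_{j\ell}\sigma^{2\ell+2}\phi_{j\ell}(r)\phi_{j\ell}(r')+O(\sigma^{2\ell+3})$. The coefficient is nonzero because $C_{j\ell}\neq0$ and Proposition~\ref{prop:positive} gives $\phi_{j\ell}>0$. Write the branch contribution as $G^{\mathrm{BC}}(t)\propto\int_0^\infty e^{-\sigma t}\Disc G(-i\sigma)\,\dd\sigma$. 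Then Eq.~\eqref{eq:laplaceTail} with $\gamma=2\ell+2$ and $k=0$ gives $t^{-2\ell-3}$. For the Schwarzschild expansion we take from Refs.~\cite{CasalsOttewill,CasalsOttewillBranch} that the remainder is $O(\sigma^{2\ell+3}\ln\sigma)$. Its inversion is $O(t^{-2\ell-4}\ln t)$, so it is subleading. We also need the large-$\sigma$ part of the cut integral to be exponentially small in $t$. This is the standard Watson-lemma localization, and we take it from the cited literature.

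For the transformed kernel we apply $\cA_{j\ell}$ in the variable $r$ only. This operator is $\omega$-independent, hence single-valued across the cut, so $\Disc(\cA_{j\ell,r}G)=\cA_{j\ell,r}\Disc G$. By Eq.~\eqref{eq:BCfactorized} this equals $\mathcal{B}_{j\ell}(\sigma)\,[\cA_{j\ell}f_{j\ell}(r,-i\sigma)]\,f_{j\ell}(r',-i\sigma)$. Now insert Eq.~\eqref{eq:Afsmall}, namely $\cA_{j\ell}f_{j\ell}=-\sigma/\phi_{j\ell}(r)+O(\sigma^2)$. Together with $f_{j\ell}(r')=\phi_{j\ell}(r')+O(\sigma)$ and Eq.~\eqref{eq:BCfactor}, this gives
\begin{equation*}
\Disc(\cA_{j\ell,r}G)=-C_{j\ell}\sigma^{2\ell+3}\frac{\phi_{j\ell}(r')}{\phi_{j\ell}(r)}+O(\sigma^{2\ell+4}),
\end{equation*}
up to possible logarithmic corrections at the next order. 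The leading coefficient is nonzero, again by positivity of $\phi_{j\ell}$. Applying Eq.~\eqref{eq:laplaceTail} with $\gamma=2\ell+3$ then yields $t^{-2\ell-4}$.

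For the sourced statement we integrate this kernel in $r'$ against the source or initial-data term $J$. The leading coefficient becomes $-C_{j\ell}\phi_{j\ell}(r)^{-1}\int\phi_{j\ell}(r')J(r')\,\dd r'$, the leading threshold overlap. If this overlap is nonzero, the same exponent $2\ell+3$ survives, which is the counting in Eq.~\eqref{eq:generalGamma} with $\delta=0$. Exchanging the $r'$ integral with the expansion requires a uniform remainder bound in $r'$ on the support of $J$. This holds for compactly supported sources; for other sources we state it as an assumption.

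The main obstacle is the uniformity of the $O(\cdot)$ remainders. This affects both the interchange of $\cA_{j\ell}$, which differentiates in $r$, with the expansion Eq.~\eqref{eq:fExpansion}, and the control of logarithmic terms in $\mathcal{B}_{j\ell}$ at order $\sigma^{2\ell+3}$. I would first try to obtain a derivative-level remainder directly from the MST series, which converges uniformly with its $r$-derivative on compact exterior sets. Failing that, I would argue through the ODE: since $f_{j\ell}$ solves the radial equation, which depends analytically on $\omega^2$, on a compact $r$-interval the $r_*$-derivative remainder inherits the same order in $\sigma$ from the boundary data. Either route keeps the next order no worse than $O(\sigma^{2\ell+4}\ln\sigma)$, which is all the argument needs.
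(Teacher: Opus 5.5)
Your proposal is correct and follows essentially the same route as the paper. You pull the $\omega$-independent operator $\cA_{j\ell,r}$ through $\Disc$, insert the factorized form Eq.~\eqref{eq:BCfactorized} with $\mathcal{B}_{j\ell}\sim C_{j\ell}\sigma^{2\ell+2}$ and $\cA_{j\ell}f_{j\ell}=-\sigma/\phi_{j\ell}+O(\sigma^2)$, use positivity of $\phi_{j\ell}$ to get a nonvanishing $\sigma^{2\ell+3}$ coefficient, and Laplace-invert; your extra care about uniform remainders under $\partial_{r_*}$ and in $r'$, possible logarithms, and Watson-lemma localization fills in details the paper leaves implicit without changing the argument.
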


\begin{proof}
Using the factorized branch-cut expression
\eqref{eq:BCfactorized},
\begin{equation}
\begin{split}
    \Disc\!\left(
    \cA_{j\ell,r}G_{j\ell}
    \right)
    ={}&
    \mathcal{B}_{j\ell}(\sigma)
    \left[
    \cA_{j\ell}
    f_{j\ell}(r,-i\sigma)
    \right] \\
    &\times
    f_{j\ell}(r',-i\sigma).
\end{split}
\end{equation}
Equations~\eqref{eq:BCfactor}, \eqref{eq:fExpansion}, and
\eqref{eq:Afsmall} give
\begin{equation}
\begin{split}
    \Disc\!\left(
    \cA_{j\ell,r}G_{j\ell}
    \right)
    ={}&
    \left[
    C_{j\ell}\sigma^{2\ell+2}
    +O(\sigma^{2\ell+3})
    \right] \\
    &\times
    \left[
    -\frac{\sigma}{\phi_{j\ell}(r)}
    +O(\sigma^2)
    \right]
    \left[
    \phi_{j\ell}(r')+O(\sigma)
    \right].
\end{split}
\end{equation}
Therefore
\begin{equation}
    \boxed{
    \Disc\!\left(
    \cA_{j\ell,r}G_{j\ell}
    \right)
    =
    -C_{j\ell}
    \frac{\phi_{j\ell}(r')}
         {\phi_{j\ell}(r)}
    \sigma^{2\ell+3}
    +O(\sigma^{2\ell+4})
    }.
    \label{eq:transformedBCcoefficient}
\end{equation}
The coefficient is nonzero because
$C_{j\ell}\neq0$ and
$\phi_{j\ell}(r),\phi_{j\ell}(r')>0$ for finite exterior radii.

Thus the first surviving branch term is exactly of order
$\sigma^{2\ell+3}$, not merely at least of that order.
Laplace inversion then gives
\begin{equation}
    \cA_{j\ell,r}
    G_{j\ell}^{\mathrm{BC}}(t;r,r')
    \sim t^{-2\ell-4}.
\end{equation}
For a sourced solution, this coefficient survives provided the
leading source overlap is nonzero.
\end{proof}

For the lowest physical modes,
\begin{align}
    j=\ell=0:&\qquad t^{-3}\longrightarrow t^{-4},\\
    j=\ell=1:&\qquad t^{-5}\longrightarrow t^{-6},\\
    j=\ell=2:&\qquad t^{-7}\longrightarrow t^{-8}.
    \label{eq:lowestTailShifts}
\end{align}
The standard Price-law powers on the left-hand side are known results. The statement here concerns their transformation under the one-sided threshold suppressor.

\section{Initial data and source cancellation}
\label{sec:data}

The source must be treated separately from the observable. Consider
\begin{equation}
    \partial_t^2\Psi-\partial_{r_*}^2\Psi+V\Psi=0.
    \label{eq:timeRW}
\end{equation}
Let
\begin{align}
    \Psi(0,r_*)&=\Psi_0(r_*),\\
    \partial_t\Psi(0,r_*)&=\Pi_0(r_*).
\end{align}
With the retarded transform
\begin{equation}
    \widehat\Psi(\omega,r_*)=\int_0^\infty e^{i\omega t}\Psi(t,r_*)\,\dd t,
\end{equation}
integration by parts gives
\begin{equation}
    [\partial_{r_*}^2+\omega^2-V]\widehat\Psi=i\omega\Psi_0-\Pi_0.
    \label{eq:initialSource}
\end{equation}
Thus
\begin{equation}
    J(\omega)=i\omega\Psi_0-\Pi_0.
    \label{eq:Jinitial}
\end{equation}

For generic compact data, the leading source term is usually nonzero at $\omega=0$. The standard fixed-radius tail is then $t^{-2\ell-3}$. For momentarily stationary compact data, $\Pi_0=0$, the source contains one extra factor of $\omega$. The known late-time behavior becomes $t^{-2\ell-4}$, in agreement with Price and Burko~\cite{PriceBurko}.

Applying the canonical suppressor gives one further threshold zero. For generic data with nonzero leading overlap,
\begin{equation}
    \cA_{j\ell}\Psi\sim t^{-2\ell-4}.
\end{equation}
For momentarily stationary data, provided the next overlap does not also vanish,
\begin{equation}
    \cA_{j\ell}\Psi\sim t^{-2\ell-5}.
    \label{eq:staticDataShift}
\end{equation}
There are also nonstationary data for which the leading threshold overlap vanishes. Such a cancellation belongs to the source, not to the observable. Recent work has emphasized how the spectral content of the source can strongly change tail amplitudes~\cite{LeonVega2026}.

\section{Relation to Darboux theory}
\label{sec:darboux}

The canonical suppressor is also a zero-energy Darboux operator. This connection is classical. Define
\begin{equation}
    H=-\partial_{r_*}^2+V,
\end{equation}
and $w=\phi'/\phi$. Because $w'+w^2=V$,
\begin{equation}
    H=\cA^\dagger\cA,
\end{equation}
where
\begin{align}
    \cA&=\partial_{r_*}-w,\\
    \cA^\dagger&=-\partial_{r_*}-w.
\end{align}
The partner operator is $\widetilde H=\cA\cA^\dagger$. Zero-energy and open-system supersymmetric transformations have been studied before~\cite{Leung2001,Glampedakis2017}. We do not claim this factorization as new.

There is, however, an important distinction between $\cA R_H$ and the unit-source Green function of $\widetilde H$. Let $z=\omega^2$ and define
\begin{align}
    R(z)&=(H-z)^{-1},\\
    \widetilde R(z)&=(\widetilde H-z)^{-1}.
\end{align}
Intertwining gives $\cA R=\widetilde R\cA$. Multiplying on the right by $\cA^\dagger$ gives
\begin{equation}
    \cA R\cA^\dagger=I+z\widetilde R.
\end{equation}
Therefore
\begin{equation}
    \boxed{\widetilde R(z)=\frac{\cA R(z)\cA^\dagger-I}{z}}.
    \label{eq:partnerResolvent}
\end{equation}

This distinction is important for the main result of this paper.
The one-power tail shift proved in
Theorem~\ref{thm:tailshift} is a statement about the physical
one-sided response
\begin{equation}
    \cA R,
\end{equation}
not about the unit-source Green function $\widetilde R$ of the
Darboux partner problem. The two objects differ precisely by the
right action of $\cA^\dagger$ and by the factor $1/\omega^2$ in
Eq.~\eqref{eq:partnerResolvent}. Their threshold powers therefore
need not agree.

\section{Checks against known Schwarzschild transformations}
\label{sec:checks}

The general criterion gives several useful checks. First consider $T=\partial_{r_*}$. For the gravitational quadrupole,
\begin{equation}
    \phi_{22}=\left(\frac{r}{2M}\right)^3,
\end{equation}
so
\begin{equation}
    \partial_{r_*}\phi_{22}=\frac{3F}{r}\phi_{22},
\end{equation}
which is nonzero at every ordinary fixed exterior radius. The derivative does not suppress the leading threshold state. Hence the fixed-radius leading power remains $t^{-7}$.

The Regge-Wheeler and Zerilli equations are related by a Chandrasekhar transformation. This map is regular at zero frequency for radiative multipoles and does not generically kill the leading static state. The two parity sectors therefore provide a regular non-annihilating check of the threshold criterion. Recent frequency-domain work treats these parity relations at the level of the Green function~\cite{Rosato2026}.

The transformation between Regge-Wheeler and curvature variables is also well studied. Recent work has constructed the Regge-Wheeler and Teukolsky Green functions directly in Schwarzschild spacetime~\cite{Aruquipa2026}. These results should be viewed as checks and context for the present framework, not as new consequences claimed here.

\section{Discussion}
\label{sec:discussion}

The results above separate two parts of the black-hole spectral
problem that behave differently under a first-order differential
observable. At nonzero frequency, the relevant transformation can
remain invertible and preserve the physical ingoing and outgoing
boundary conditions. At the zero-frequency threshold, however, the
same transformation can become globally degenerate. It can then
remove the leading threshold state even though the nonzero-frequency
quasinormal-mode problem is unchanged.

Several ingredients used in this analysis are already known. The
Schwarzschild late-time tail and its branch-cut origin are classical
results. The horizon-regular static solutions are also known, as is
their relation to the static response of Schwarzschild black holes.
The Darboux factorization associated with a zero-energy solution is
standard as well. The point of the present analysis is the connection
between these ingredients.

For a general first-order physical observable, the leading threshold
term is controlled by the action of the limiting operator on the
range of the leading branch coefficient. In the rank-one case this
reduces to
\begin{equation}
    T_0\phi_0=0.
\end{equation}
This condition is stronger than the appearance of an isolated zero
of the first-jet determinant. It forces the transformation to become
globally degenerate at the threshold.

The Schwarzschild ReggeWheeler problem gives an exact example. The
regular static solution determines all regular first-order operators
that suppress the leading threshold state. For the canonical choice,
\begin{equation}
    \cA_{j\ell}
    =
    \partial_{r_*}
    -
    \partial_{r_*}\ln\phi_{j\ell},
\end{equation}
the first-jet determinant is
\begin{equation}
    D=\omega^2.
\end{equation}
Thus the map is nondegenerate for every nonzero frequency and
degenerate exactly at the threshold.

This gives a direct connection with the earlier nonzero-frequency
analysis in Ref.~\cite{SubhashPolePaper}. Away from the threshold,
local zeros of the transformation determinant do not by themselves
produce new quasinormal-mode poles. The present result shows that the
zero-frequency case is different. Suppression of the leading
threshold state occurs when the transformation loses one homogeneous
solution globally.

The effect can be seen directly in the branch-cut coefficient. At
fixed exterior radii,
\begin{equation}
    \Disc\!\left(
    \cA_{j\ell,r}G_{j\ell}
    \right)
    =
    -C_{j\ell}
    \frac{\phi_{j\ell}(r')}
         {\phi_{j\ell}(r)}
    \sigma^{2\ell+3}
    +O(\sigma^{2\ell+4}),
\end{equation}
so the leading transformed coefficient is nonzero. The fixed-radius
tail therefore changes from
\begin{equation}
    t^{-2\ell-3}
    \quad\hbox{to}\quad
    t^{-2\ell-4}.
\end{equation}

The result concerns the one-sided physical response
$\cA R$. It should not be identified with the unit-source Green
function of the Darboux partner equation. At a zero-energy
factorization point the partner resolvent contains an additional
factor of $1/\omega^2$, so its threshold behavior need not have the
same power.

There are also some natural boundaries to the present analysis. We
use the known small-frequency Schwarzschild expansion rather than
constructing a new weighted-space resolvent theory from first
principles. The canonical suppressor is mode dependent and therefore
acts on separated radial equations rather than as one universal
local spacetime operator. Our tail result is for fixed exterior
radius; the event horizon and future null infinity require separate
endpoint analyses. Singular observables containing factors such as
$1/\omega$ also require a separate treatment because they may change
the regular zero-frequency part of the response.

These points do not affect the fixed-radius result proved here. They
instead indicate where the threshold formulation can be extended.
In particular, it would be useful to study the same mechanism at the
horizon and null infinity, for higher-order differential observables,
and for rotating black-hole equations where the zero-frequency
structure is more complicated.

\section{Conclusion}

We studied how a first-order differential observable changes the
zero-frequency threshold response of a black-hole master equation.

For a finite-rank leading threshold term, the transformed coefficient
is obtained by applying the limiting observable to the threshold
range. In the rank-one case, suppression of the leading term requires
\begin{equation}
    T_0\phi_0=0.
\end{equation}
For a first-order map this condition forces the first-jet determinant
to vanish globally at the threshold.

For Schwarzschild Regge-Wheeler modes, the horizon-regular static
solution determines the complete family of regular first-order
threshold suppressors. For the canonical operator,
\begin{equation}
    \cA_{j\ell}
    =
    \partial_{r_*}
    -
    \partial_{r_*}\ln\phi_{j\ell},
\end{equation}
the determinant is exactly
\begin{equation}
    D=\omega^2.
\end{equation}
The transformation is therefore nondegenerate at every nonzero
frequency but globally degenerate at zero frequency.

The transformed branch-cut coefficient begins one power of frequency
later than the original one. Consequently, at fixed exterior radius,
\begin{equation}
    t^{-2\ell-3}
    \longrightarrow
    t^{-2\ell-4}.
\end{equation}

The main point is therefore simple. Two differential observables can
have the same nonzero quasinormal-mode spectrum and still have
different late-time tails. The difference is controlled by the
zero-frequency threshold, where the transformation can become
globally degenerate.

\appendix
\section{Static hypergeometric solution}

For reference, we give the short derivation of Eq.~\eqref{eq:staticPolynomial}. At zero frequency the radial equation in $x=r/(2M)$ is
\begin{equation}
    x^2(x-1)\phi''+x\phi'-[\ell(\ell+1)x+1-j^2]\phi=0.
\end{equation}
Set $\phi=x^{j+1}u$. A direct substitution gives
\begin{equation}
\begin{split}
    x(1-x)u''+[2j+1-(2j+2)x]u'\\
    -(j-\ell)(j+\ell+1)u=0.
\end{split}
\end{equation}
This is the hypergeometric equation with
\begin{align}
    a&=j-\ell,\\
    b&=j+\ell+1,\\
    c&=2j+1.
\end{align}
Because $a=-(\ell-j)$ is a nonpositive integer, the series terminates. At $x=1$,
\begin{equation}
    {}_2F_1(-n,b;c;1)=\frac{(c-b)_n}{(c)_n},
\end{equation}
which fixes the normalization used in Eq.~\eqref{eq:staticPolynomial}.

\section{First-jet determinant for the suppressor}

For a Schr\"odinger-type equation
\begin{equation}
    y''+(\omega^2-V)y=0
\end{equation}
and $T=A+B\partial_x$, the first-jet determinant is
\begin{equation}
    D=A^2+AB'-BA'+B^2(\omega^2-V).
\end{equation}
Take $A=-Bw$ with $w=\phi'/\phi$. Then
\begin{equation}
    A^2+AB'-BA'=B^2(w^2+w').
\end{equation}
The zero-frequency equation gives $w'+w^2=V$. Therefore
\begin{equation}
    D=B^2\omega^2.
\end{equation}

\section{Initial-value source}

Let
\begin{equation}
    \widehat\Psi(\omega,r_*)=\int_0^\infty e^{i\omega t}\Psi(t,r_*)\,\dd t.
\end{equation}
Integrating twice by parts gives
\begin{equation}
\begin{split}
    \int_0^\infty e^{i\omega t}\partial_t^2\Psi\,\dd t
    =-\Pi_0+i\omega\Psi_0-\omega^2\widehat\Psi.
\end{split}
\end{equation}
Hence
\begin{equation}
    [\partial_{r_*}^2+\omega^2-V]\widehat\Psi=i\omega\Psi_0-\Pi_0.
\end{equation}

\end{multicols}
\end{document}